\definecolor{mygreen}{RGB}{20,180,20}
\newtheorem{thm}{Theorem}
\newtheorem{defn}[thm]{Definition}
\newtheorem{lem}[thm]{Lemma}
\newtheorem{cor}[thm]{Corollary}
\newtheorem{prop}[thm]{Proposition}
\title{Ordered Submodularity and its Applications to Diversifying Recommendations}
\author{
    Jon Kleinberg\thanks{Supported in part by a Vannevar Bush Faculty Fellowship, MURI grant W911NF-19-0217, AFOSR grant FA9550-19-1-0183, a Simons Collaboration grant, and a grant from the MacArthur Foundation.}\\
    Cornell University\\
    \texttt{kleinberg@cornell.edu}
    \and
    Emily Ryu\thanks{Supported in part by AFOSR grant FA9550-19-1-0183.}\\
    Cornell University\\
    \texttt{eryu@cs.cornell.edu}
    \and
    \'Eva Tardos\thanks{Supported in part by NSF grant CCF-1408673 and AFOSR grant FA9550-19-1-0183.}\\
    Cornell University\\
    \texttt{eva.tardos@cornell.edu}
}
\begin{document}

\maketitle

\begin{abstract}

A fundamental task underlying many important optimization problems, from influence maximization to sensor placement to content recommendation, is to select the optimal group of $k$ items from a larger set. Submodularity has been very effective in allowing approximation algorithms for such subset selection problems. However, in several applications, we are interested not only in the elements of a set, but also the \emph{order} in which they appear, breaking the assumption that all selected items receive equal consideration. One such category of applications involves the presentation of search results, product recommendations, news articles, and other content, due to the well-documented phenomenon that humans pay greater attention to higher-ranked items. As a result, optimization in content presentation for diversity, user coverage, calibration, or other objectives more accurately represents a sequence selection problem, to which traditional submodularity approximation results no longer apply. Although extensions of submodularity to sequences have been proposed, none is designed to model settings where items contribute based on their position in a ranked list, and hence they are not able to express these types of optimization problems. In this paper, we aim to address this modeling gap.

Here, we propose a new formalism of \emph{ordered submodularity} that captures these ordering problems in content presentation, and more generally a category of optimization problems over ranked sequences in which different list positions contribute differently to the objective function. We analyze the natural ordered analogue of the greedy algorithm and show that it provides a $2$-approximation. We also show that this bound is tight, establishing that our new framework is conceptually and quantitatively distinct from previous formalisms of set and sequence submodularity.

\end{abstract}



\section{Introduction} \label{sec:intro}

Many important optimization problems involve selecting a subset
of items from a larger set. Examples of such tasks include influence
maximization in social networks~\cite{KKT03}, sensor placement and
experimental design~\cite{krause2008robust}, and recommendation
systems~\cite{Yue11, gabillon2013adaptive}. In domains in which the
goal is this type of subset selection,
\emph{submodularity} has been widely used to express the notion of
``diminishing marginal returns.'' Submodularity is a powerful
framework for approximate optimization; in particular, there is a rich
literature on approximation algorithms for selecting subsets 
achieving near-maximum value with respect to a submodular function
\cite{nemhauserwolseyfisher1978,
calinescu2011, filmus2013, vondrak2013, krause2014survey}.

An implicit modeling assumption in the use of submodularity is
that the order of the selected elements does not matter;
this is crucial, since submodularity is a property of functions
that operate on unordered sets.
However, in many applications, we are interested not only in the
elements of a set, but also the order in which the elements appear.
A broad category of such applications, in both on-line and off-line
settings, is the presentation of content to an audience ---
for example, search results, product or entertainment recommendations, 
news articles, social media posts, and many other instances.
Content presentation crucially depends on sequential effects 
due to well-documented phenomena in human behavior ---
specifically, that human cognition is generally limited to
serially processing information one piece at a time, rather than
processing all elements of a list in parallel. Moreover, people tend
to have limited attention span and patience, meaning that when
items of content are presented in a ranked list,
the higher-ranked items are likely to receive significantly greater attention
\cite{pan2007google}. This results in several empirical
observations, such as inverse power law relationships in number of
clicks on search results~\cite{williamsZipf} and sharp decreases in
webpage viewing time ``below the fold'' (content that does not fit on
the first screen and must be scrolled down to reach)~\cite{NNGfold}.

The use of optimization frameworks for content presentation
suggests some of the fundamental limits in the application of submodularity
for problem domains where sequential effects are important.
In particular, for a number of basic problems in ranking and recommendation,
standard formalisms model them as the problem of selecting
a subset of items to present to a user, then showing that
the resulting objective function over selected subsets is
submodular, and thus deriving guarantees for
approximating this objective function.
But if the value of a set of items to a user is strongly
dependent on the order in which it is presented, 
then the optimization is in fact taking place over sequences rather
than sets, and in this richer formalism submodularity would not
be applicable.

Our goal in this paper is to propose a formalism that can address 
these types of ordering issues in optimization problems generally,
and for a collection of basic content presentation problems in particular.
We begin by observing (and demonstrating in Section~\ref{sec:relatedwork})
that while other generalizations of submodularity to sequences 
have been formulated, they fundamentally make assumptions that are not
well-suited to modeling the sequential effects that arise from
phenomena like the diminshing attention of a user reading a ranked list.
Hence, a new notion of
submodularity for sequences is required. Here, we present such a
generalization of a combined monotonicity-submodularity property,
which we term \emph{ordered submodularity}.
We provide approximation guarantees for functions of this type,
and we show how they capture the sequential effects in a
range of standard content presentation problems.

\paragraph{\bf Motivating applications.} 
Throughout our work, it is useful to keep in mind the following two
standard problems in ranking and recommendation that help motivate our work.
The first is a {\em coverage} problem that is used for creating diversity
in ranked lists of items as follows 
\cite{agrawal2009, ashkan2015}.
Suppose we want to produce a list of $k$ recommendations (say of movies)
to show to a group of users.
Each movie can satisfy only some subset of the users, and we would like
to choose the $k$ movies so to maximize the number of users 
who like at least one item on the list.
(In this way, we seek to cover their preferences as completely as
possible with $k$ items.)
We can view the number of users satisfied 
as an objective function on the set of
$k$ items chosen; in \cite{agrawal2009} it is shown that this function
is monotone and submodular, and hence greedy maximization provides a
$(1-\frac{1}{e})$-approximation.
But as the authors of \cite{agrawal2009} observe, in the real application
users will have declining attention as they process the list of items,
and different users will stop reading the list at different points.
This basic addition to the model --- that users have
differential patience --- 
means that the order of the list is crucial for evaluating
the number of users that it satisfies; and once we introduce
ordering into the problem, the results from the large body of work on
submodular optimization no longer hold in this setting.
Is there still a way to find good approximations to the optimal 
ranked list?

The second problem we draw on for motivation is the task of
{\em calibrating} recommendations \cite{Steck18}.
In this problem, we present a list of $k$ recommendations
to a single user (again, suppose they are movies);
and we assume that each movie represents a distribution over {\em genres}.
(For example, a documentary in Italian about the national soccer team is a multi-genre mixture of a movie about sports, an Italian language film, and a documentary.)
The list of $k$ items thus induces an {\em average} distribution
over genres.
Now, the user has a {\em target distribution} over genres that
reflect the extent to which they want to consume each genre
in the long run.
A natural goal is that the average distribution induced by the
list of recommendations should be ``close'' (in a distributional sense)
to the target genre distribution of the user;
when these two distributions are close, we say that the set of
recommendations is {\em calibrated} to the user.
(For example, a user who likes both Italian language films and movies about sports might well be dissatisfied with recommendations consisting only of sports movies in English;
this set of recommendations would be badly calibrated to the
user's target distribution of genres.) 
For natural measures of distributional similarity, the selection of a
set of $k$ items to match the user's target distribution can be
formulated as the maximization of a submodular set function.
But here too, the work introducing this problem observed that since
user attention diminshes over the course of a ranked list, the
list of $k$ items is really producing a {\em weighted average} over
the genres of these items, with the earlier items in the list 
weighted more highly than the later ones \cite{Steck18}.
Once we introduce this natural addition to the problem, based on
ordering, it again
becomes unclear whether there are good algorithms to find provably
well-calibrated lists of recommendations.

\paragraph{\bf A new definition of ordered submodularity.}
In this paper we introduce a property called {\em ordered submodularity}
that can be viewed as an analogue of monotonicity and submodularity
for functions defined on sequences.
It captures both of the motivating applications described above,
and more generally captures a category of optimization problems 
which search over lists, and in which different list positions 
contribute differently to the objective function.

We define the property as follows.
Let $f$ be a function defined on a sequences of elements from some ground set;
we say that $f$ \emph{ordered-submodular} if for all
sequences of elements $s_1 s_2 \dots s_k$, the following property
holds for all $i \in [k]$ and all other elements $\bar{s}_i$: 
$$f(s_1 \dots s_i) -
f(s_1 \dots s_{i-1}) \ge f(s_1 \dots s_i \dots s_k) - f(s_1 \dots
s_{i-1} \bar{s}_i s_{i+1} \dots s_k).$$

Notice that if $f$ is an ordered-submodular function that takes
sequences as input but does not depend on their order
(that is, it produces the same value for all permutations of
a given sequence), then it follows immediately from the definition
that $f$ is a monotone submodular set function.
In this way, monotone submodular set functions are a special
case of our class of functions.

We prove that for any ordered-submodular function $f$,
the natural greedy algorithm for maximizing $f$ --- building 
a sequence by always appending the item that produces the
largest marginal gain --- is a $2$-approximation, and there
are simple examples of ordered-submodular functions for which
the greedy algorithm does no better than a factor of $2$.
This highlights a key distinction from the unordered case of
monotone submodular set functions: there the corresponding
greedy algorithm produces the strictly better approximation
guarantee of $(1 - 1/e)$. 
Hence the move to ordered submodularity
changes the approximability of the maximization problem
in a qualitative way: it still admits a small constant-factor bound,
but a different constant.

In the coverage problem described above 
with users of differential patience, we show directly that
the objective function is ordered-submodular, and this provides
the first non-trivial approximation guarantee for this problem.
(This problem provides some of the simple examples
in which the factor of 2 is tight for the performance of
the greedy algorithm.)
For the calibrated recommendation problem with ranked lists
described above, we need to specify how the distance between
distributions will be measured;
we show that that natural ways of measuring distance (such as the
classical family of $f$-divergences from the statistics and
information theory literature) give rise to ordered-submodular functions.
We thus obtain the first non-trivial approximation guarantee for this 
ordered problem as well.
As noted above, we find it interesting that existing formalisms
extending submodularity to sequences do not capture the
objective functions arising from problems such as these two,
and the way in which items in these problems 
contribute based on their position in a ranked list.\footnote{As one
indication of the differences at work, these earlier formalisms 
for submodularity over sequences have the property that 
the greedy algorithm continues to be a $(1-1/e)$-approximation
for the corresponding maximization problem.
But for the ordered coverage problem we have described here,
the greedy algorithm can differ from the optimum by a factor of 2;
this is the tight bound on its approximation performance, and it suggests
that the problem has a qualitatively different type of objective function.}
In the next section, we provide some detail for why these
alternative formalisms differ from our proposal and do not capture the objective functions
we consider in the paper; following this, we establish our
approximation results and their application to the problems discussed here.

\section{Related work} \label{sec:relatedwork}

First, we cover general theories of submodularity in sequences and explain how they cannot model the types of problems that our definition does. Then, we discuss applications in the specific context of recommender systems.

\subsection{Existing frameworks for submodularity in sequences} 
Alaei, Makhdoumi, and Malekian (2010) introduce the first generalizations of \emph{sequence-submodularity} and \emph{sequence-monotonicity} in the context of online ad allocation, and show that the greedy algorithm for sequence-submodular maximization achieves a $(1-\frac{1}{e})$-approximation to the optimal solution~\cite{Alaei19}. However, a major limitation of their model is that their definition of sequence-monotonicity is extremely strong. Their result requires that $f(A) \le f(B)$ for any sequences $A$ and $B$ such that $A$ is a subsequence of $B$, which in many settings is too restrictive to be useful. For instance, if an element $s_1$ only contributes to the value of the objective function when included as the first element of the input sequence but not as the second, it is possible to have $f(s_1) > f(s_2 s_1)$, violating sequence-monotonicity.

Similarly, Zhang et al. (2013) study the maximization of \emph{string submodular} functions of strings (or sequences) of actions chosen from a set, a notion similar to sequence-submodularity but only requiring monotonicity and diminishing returns with respect to prefixes, not all subsequences~\cite{Zhang16}. When the function also satisfies monotonicity with respect to postfixes and not only prefixes, then they, too, establish a $(1-\frac{1}{e})$-approximation ratio for the greedy algorithm, and provide improved guarantees when additional curvature constraints are satisfied. More formally, \emph{prefix/postfix monotonicity} requires that for any sequences $A$ and $B$ and their concatenation $A||B$, it must hold that $f(A||B)\ge f(A)$ (prefix monotonicity) \emph{and} $f(A||B) \ge f(B)$ (postfix monotonicity), properties which were both previously suggested by Streeter and Golovin (2008), who considered sequences in the context of an online submodular selection problem~\cite{StreeterGolovin}. As seen above, postfix monotonicity is not a natural property when modeling attention drop-off, since it would imply that prepending a ``bad'' movie that interests nobody at the front of a ranked list would capture more users, which clearly is not the case.

In another direction, Tschiatschek, Singla, and Krause (2017) approach the selection of maximizing sequences using submodularity by encoding sequential dependencies in a directed acyclic graph~\cite{Tschiatschek2017}, and Mitrovic et al. (2018) generalize this concept from DAGs to hypergraphs~\cite{Mitrovic18}. They place an edge between two nodes $(u,v)$ of the graph if there is additional utility in selecting element $u$ before element $v$, and then consider submodular functions on the edge set of the graph. However, this approach is only able to represent sequential dependencies inherent to the identity of a set of elements (for example, watching a prequel before the sequel), but it cannot represent decreasing attention or other complex dependencies dependencies that may vary with the objective function, or with the position and identity of other elements in the input sequence. 

Most recently, Bernardini, Fagnani, and Piacentini (2021) propose a framework in which the set of all elements is equipped with some property $g$, according to which it has a total ordering. Their objective function is defined recursively as the sum of the marginal increase of appending each element $\sigma$ to the list of earlier elements, weighted by $g(\sigma)$. Denoting the subsequence of the first $i$ elements in the list as $S_i$, for any function $g$ and any monotone submodular set function $h$, they study sequence functions of the form 
$$f(s_1 \dots s_k) = \sum_{i=1}^k g(s_i) \cdot [ h(S_i) - h(S_{i-1}) ].$$
Phrased this way, the sequential nature of the problem results from considering the marginal increase due to each element with respect to the set of elements before it, but the weight assigned to each marginal increase depends solely on the \emph{identity} of the element, not its rank. While this is a valid assumption in a number of applications, it does not hold in our particular use case of modeling sequential attention drop-off. In contrast, our framework encompasses functions of the form 
$$f(s_1 \dots s_k) = \sum_{i=1}^k g_i \cdot [ h(S_i) - h(S_{i-1}) ],$$ 
where $g_i$ can be thought of as the weight assigned to \emph{rank} $i$. This key difference allows us to avoid imposing a total $g$-ordering on the set of all elements (even if such an ordering does exist, this information may not be known to a system designer). Perhaps more significantly, it also introduces an additional sequential aspect that further differentiates our approach from traditional set submodularity.

\subsection{Applications to diversifying and calibrating recommendations} 

One important topic in content presentation is the problem of curating search results that are useful to a diverse population of users. Agrawal et al. (2009) establish a mathematical formalization of this \emph{user coverage} problem, which they study through the lens of submodularity~\cite{agrawal2009}. They suppose that each item has some probability of satisfying every user type. Then, they seek to display a diverse set of search results to maximize the number of users who find at least one satisfactory document. That is, given a query, they seek to select $k$ search results to maximize the probability that a randomly chosen user drawn from a heterogeneous group likes at least one item in the set. The authors show that this objective is a monotone submodular set function, and consequently observe that there exists a $(1-\frac{1}{e})$-approximation algorithm for the problem. While mathematically elegant, a key limitation of this formulation is that it assumes that all users are equally patient and give equal consideration to all search results. Acknowledging that this is not an accurate representation of human patience and attention in the real world, the authors suggest as a direction for future work the formulation of an objective function that accounts for the distribution of users who stop at different points in the search results. Our work does exactly this. In doing so, the presentation order of the search results becomes important, and the objective function becomes a sequence function that must be studied using our new definition of ordered submodularity. 

Ashkan et al. (2015) also study diversification for user coverage in recommender systems, this time using a modular function subject to a submodular constraint~\cite{ashkan2015}. They maintain the consideration that recommendations should not be \emph{only} diverse, but also still broadly relevant and useful, by maximizing a weighted sum of a diversification metric and the sum of all the utilities of the recommended items. In their setup, the greedy approach to maximization is optimal. But again, their formulation assumes that all users have equal patience and consider all recommendations equally, so their optimality result does not hold when users have differential patience values. 



Steck (2018) also considers the question of creating diversity in lists of recommendations, but with the different goal of creating recommendations that are \emph{calibrated} to the user's interests~\cite{Steck18}. (We note that in the literature, ``diversification'' has historically been used to refer to variants of the coverage problem previously discussed, but we find it more useful to think of ``diversity'' as a general concept describing lists that include a mixture of categories. The coverage objective is one way to achieve diversity by including as heterogeneous a mixture as possible; the calibration objective is another way that includes categories in a proportional mixture. Section 5.1 of~\cite{Steck18} discusses the relationship between diversity, calibration, and other metrics in more detail.) Steck proposes as a heuristic for calibration a modified version of the KL divergence from the recommended distribution to the user's preference distribution. When all the recommended items are assigned equal weight, this induces a submodular set function, which can be used for approximate maximization via the greedy algorithm. But in the case when the recommended items have unequal weights, such as when accounting for attention dropoff, his approximation results do not apply. We discuss more of the technical details of Steck's formalism in Section~\ref{sec:calibration} and describe extensions to our ordered-submodular optimization framework for sequences.

Lastly, another setting in which some notion of weights appears in submodular optimization is the context of knapsack constraints or budgets~\cite{Sviridenko, Alon2012, SomaKIK14}. Here, we note that despite the initial similarities in terminology, the use of weights as capacities in this line of work is quite different from the attenuation of attention and impact that we intend our weights to represent.



\section{Definition of ordered submodularity}\label{sec:ordsub_defn}

In this section we define our extension of submodularity to ordered sets. To simplify notation, for two sequences $A$ and $B$ we will use $A||B$ to denote their concatenation. For a single element $s$ we will use $A||s$ to denote $s$ added at the end of the list $A$.

\begin{defn}[Ordered submodularity]
A sequence function $f$ is \textbf{ordered-submodular} if for all sequences $A$ and $B$, the following property holds for all elements $s$ and $\bar s$: $$f(A||s)-f(A)\ge f(A|| s||B)-f(A||\bar s||B).$$
\end{defn}

Ordered submodularity can be viewed as a generalization of monotonicity and submodularity for set functions. For functions $f$ that depend only on the set of elements in the input sequence and not their order, setting $\bar s=s$ implies $f(A||s)-f(A)\ge 0$, corresponding to monotonicity, and setting $\bar s$ to the ``null'' element implies 
\begin{align*}
    &f(A||s)-f(A) \ge f(A||s||B)-f(A||B),
\end{align*}
corresponding to submodularity.

On the other hand, any monotone submodular set function $f$ when viewed as a function on sequences, that does not depend on the order of the elements satisfies
\begin{align*}
f(A||s)-f(A)\ge f(A||s||B)-f(A||B)\ge f(A||\bar s||B)
\end{align*}
where the first inequality is due to submodularity and the second inequality is due to monotonicity. This is exactly ordered submodularity when $f$ is interpreted as a sequence function, so we see that ordered submodularity is indeed a very natural and well-motivated property in the sequential setting.

We now demonstrate a few basic ways of constructing ordered-submodular functions from other submodular and ordered-submodular functions.

\begin{lem} \label{lem:linear_combo}
If $f$ and $g$ are ordered-submodular, then $\alpha f+ \beta g$ is also ordered-submodular for any $\alpha, \beta \ge 0$.
\end{lem}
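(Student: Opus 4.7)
The plan is to verify the defining inequality of ordered submodularity directly for the function $h := \alpha f + \beta g$, exploiting the fact that the ordered-submodularity condition is a \emph{linear} inequality in the underlying sequence function. Fix arbitrary sequences $A$, $B$ and elements $s$, $\bar s$. I would first write down the two inequalities obtained by applying the definition to $f$ and to $g$ separately:
\begin{align*}
f(A||s)-f(A) &\ge f(A||s||B)-f(A||\bar s||B),\\
g(A||s)-g(A) &\ge g(A||s||B)-g(A||\bar s||B).
\end{align*}

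Next, I would multiply the first inequality by $\alpha$ and the second by $\beta$. The non-negativity assumption $\alpha,\beta\ge 0$ is used exactly here: it is what guarantees that scaling preserves the direction of the inequality (if either coefficient were negative, the inequality would flip and the argument would fail). Adding the two scaled inequalities and collecting terms via the linearity of pointwise addition gives
\begin{align*}
\bigl(\alpha f+\beta g\bigr)(A||s) - \bigl(\alpha f+\beta g\bigr)(A)
\;\ge\; \bigl(\alpha f+\beta g\bigr)(A||s||B) - \bigl(\alpha f+\beta g\bigr)(A||\bar s||B),
\end{align*}
which is precisely the ordered-submodularity condition for $h=\alpha f+\beta g$. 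Since $A$, $B$, $s$, $\bar s$ were arbitrary, this completes the proof.

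There is no real obstacle here; the lemma is a routine closure property, and the only subtlety worth flagging is the role of non-negativity of the coefficients in preserving inequality direction (the analogous statement fails for arbitrary real combinations). This closure property mirrors the well-known fact that non-negative linear combinations of monotone submodular set functions are again monotone submodular, so it is reassuring that ordered submodularity inherits this structural robustness.
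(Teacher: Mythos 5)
Your proof is correct and follows essentially the same route as the paper's: apply the ordered-submodularity inequality to $f$ and $g$ separately, scale by $\alpha$ and $\beta$ (using their non-negativity to preserve the inequality direction), and add. Your write-up is if anything slightly cleaner, since the paper's final displayed line contains minor typographical slips that your version avoids.
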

\begin{proof}
We simply multiply and add the two inequalities from the definition of ordered submodularity:
\begin{align*}
    \alpha \left[f(A || s) - f(A)\right] &\ge \alpha \left[f(A || s || B) - f(A || \bar{s} || B) \right] \\
    \beta \left[g(A || s) - g(A)\right] &\ge \beta \left[g(A || s || B) - g(A || \bar{s} || B) \right] \\
    \implies (\alpha f+ \beta g)(A||s) - (\alpha f+ \beta g)(A) &\ge (\alpha f+ \beta g)f(A || s || B) - (\alpha f+ \beta g)f(A || \bar{s}_i || B).
\end{align*}
\end{proof}

\begin{lem} \label{lem:threshold_submodular}
Suppose $h$ is a monotone submodular set function. Then the function $f$ constructed by evaluating $h$ on the set of the first $t$ elements of $S$, that is, 
$$f(S) = \begin{cases} h(S) & \text{if } |S| \le t \\
h(S_t) &\text{if } |S| > t \end{cases}$$
is ordered-submodular.
\end{lem}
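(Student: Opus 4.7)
The plan is to prove the ordered-submodularity inequality $f(A\|s)-f(A)\ge f(A\|s\|B)-f(A\|\bar s\|B)$ by case analysis on the length of $A$ relative to the threshold $t$, in each case reducing to the monotonicity and submodularity of $h$.

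First I would dispense with the easy case $|A|\ge t$. Here the first $t$ elements of each of the four sequences $A$, $A\|s$, $A\|s\|B$, and $A\|\bar s\|B$ are identical (namely the first $t$ elements of $A$), since prepending no elements and appending anything after the first $t$ slots cannot alter the prefix used by $f$. Consequently all four function values collapse to $h(A_t)$ and both sides of the inequality are $0$.

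The main case is $|A|\le t-1$, so that $f(A)=h(A)$ and $f(A\|s)=h(A\cup\{s\})$, identifying a sequence with the set of its elements (the monotone submodular function $h$ ignores any internal repetition, which I would note explicitly). Let $B'$ denote the prefix of $B$ of length $\min(|B|,\,t-|A|-1)$, so that the first $t$ elements of $A\|s\|B$ form the set $A\cup\{s\}\cup B'$, and likewise for $\bar s$. The inequality to prove then becomes
$$h(A\cup\{s\})-h(A)\;\ge\;h(A\cup\{s\}\cup B')-h(A\cup\{\bar s\}\cup B').$$
I would bound the left-hand side from below using submodularity of $h$ (applied to the subset inclusion $A\subseteq A\cup B'$ with the added element $s$) to obtain $h(A\cup\{s\})-h(A)\ge h(A\cup\{s\}\cup B')-h(A\cup B')$, and then invoke monotonicity of $h$ to get $h(A\cup B')\le h(A\cup\{\bar s\}\cup B')$; chaining these two steps yields the required bound.

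The main obstacle is bookkeeping rather than depth: one has to track carefully what ``the first $t$ elements'' are in each of the four sequences, especially at the boundary $|A|=t-1$, where $B'$ is empty and the right-hand side degenerates to $h(A\cup\{s\})-h(A\cup\{\bar s\})$, in which case the inequality reduces directly to monotonicity of $h$. Handling potential repetitions within $A$, $B$, or between $s$ and $\bar s$ just amounts to noting that $h$ is a set function, so duplicate occurrences are absorbed by set union, and the argument goes through unchanged.
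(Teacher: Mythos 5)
Your proof is correct and follows essentially the same route as the paper's: the trivial case $|A|\ge t$ where all four values collapse to $h(A_t)$, and the main case where the first $t$ elements of $A\|s\|B$ are captured by truncating $B$ to length $t-|A|-1$, after which submodularity of $h$ (adding $s$ to $A$ versus to $A\cup B'$) followed by monotonicity (inserting $\bar s$) gives the inequality. Your explicit handling of the $\min(|B|,t-|A|-1)$ truncation and the boundary subcase $|A|=t-1$ is slightly more careful bookkeeping than the paper's, but the argument is the same.
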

Here, it is useful to think of $t$ as a \emph{threshold} beyond which additional elements contribute nothing to the value of $f$. Once again, $S_i$ denotes the sequence of the first $i$ elements of the sequence $S$, and for a sequence $S$ we use $h(S)$ to denote the value of the submodular function on the set of elements in $S$, independent of order.

\begin{proof}
We seek to show that for all sequences $A$ and $B$ and elements $s$ and $\bar{s}$, $$f(A||s)-f(A)\ge f(A|| s||B)-f(A||\bar s||B).$$ We take two cases based on $|A|$.

\paragraph{Case 1:} $|A| \ge t$.
Then $f(A || s) = f(A) = f(A||s||B) = f(A||\bar{s}||B) = h(A_t)$, so $$f(A||s) - f(A) = 0 = f(A||s||B) - f(A||\bar{s}||B).$$

\paragraph{Case 2:} $|A| < t$.

Let $j = t - |A| - 1$. Now, observe that we have
\begin{align*}
    f(A||s) - f(A) = h(A||s) - h(A) &\ge h(A||s||B_j) - h(A||B_j) \\
    &\ge h(A||s||B_j) - h(A||\bar{s}||B_j) = f(A||s||B) - f(A||\bar{s}||B),
\end{align*}
where the first inequality is due to submodularity of $h$ and the second is due to monotonicity of $h$.
\end{proof}

\begin{lem} \label{lem:weighted_submodular}
Suppose $h$ is a monotone submodular set function and $\{g_i\}$ is a sequence of monotonically decreasing weights (i.e., $g_i \ge g_j$ if $i < j$). Then the sequence function defined by
$$f(S) = \sum_{i=1}^k g_i \cdot [ h(S_i) - h(S_{i-1}) ],$$ 
where $k = |S|$, is ordered-submodular. 
\end{lem}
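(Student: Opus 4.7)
My plan is to express $f$ as a non-negative linear combination of threshold functions of the type handled by Lemma~\ref{lem:threshold_submodular}, and then invoke Lemma~\ref{lem:linear_combo} (iterated to sums of more than two functions) to conclude ordered-submodularity. For each integer $t \ge 1$, let $f_t(S) := h(S_{\min(t,|S|)})$; by Lemma~\ref{lem:threshold_submodular} each $f_t$ is ordered-submodular. The goal is then to show that $f = \sum_t c_t f_t$ with all $c_t \ge 0$.

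Before doing the decomposition I would make two harmless reductions. First, by replacing $h$ with $h - h(\emptyset)$ I can assume $h(\emptyset) = 0$: this does not alter any marginal $h(S_i) - h(S_{i-1})$, so $f$ is unchanged, and the shifted $h$ is still monotone submodular. Second, I would fix an integer $N$ at least as large as any sequence length under consideration and set $g_{N+1} := 0$. The monotonic decrease of the weights then guarantees that $c_t := g_t - g_{t+1} \ge 0$ for every $t \le N$, and a telescoping identity gives $g_k = \sum_{t=k}^{N} c_t$.

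The heart of the proof is a short summation-by-parts calculation. For a sequence $S$ of length $k \le N$,
$$f(S) = \sum_{i=1}^{k} g_i \bigl[h(S_i) - h(S_{i-1})\bigr] = g_k h(S_k) + \sum_{i=1}^{k-1} c_i \, h(S_i),$$
where I used $h(S_0) = 0$ and rearranged so that each $h(S_i)$ for $i < k$ is multiplied by $g_i - g_{i+1} = c_i$. Substituting the telescoping identity for $g_k$ turns the first term into $\sum_{t=k}^{N} c_t h(S_k)$, and since $h(S_k) = h(S_{\min(t,k)})$ whenever $t \ge k$, the two sums merge into
$$f(S) = \sum_{t=1}^{N} c_t \, h(S_{\min(t,k)}) = \sum_{t=1}^{N} c_t \, f_t(S).$$

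With this decomposition in hand, ordered-submodularity of $f$ follows immediately: each $f_t$ is ordered-submodular by Lemma~\ref{lem:threshold_submodular}, each coefficient $c_t$ is non-negative by the monotonic-decrease hypothesis on the weights, and Lemma~\ref{lem:linear_combo} (applied inductively) closes non-negative conic combinations under ordered-submodularity. I don't anticipate a real obstacle beyond carefully managing the bookkeeping in the Abel summation; everything else is a routine assembly of the previous two lemmas.
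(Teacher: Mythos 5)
Your proof takes essentially the same route as the paper's: an Abel summation rewrites $f$ as a non-negative linear combination of threshold functions $h(S_{\min(t,|S|)})$ with coefficients $g_t - g_{t+1}$, after which Lemma~\ref{lem:threshold_submodular} and Lemma~\ref{lem:linear_combo} finish the argument. Your two bookkeeping reductions (normalizing $h(\emptyset)=0$ and fixing a common upper limit $N$ so the decomposition is uniform across sequences of different lengths) are details the paper glosses over, and they make the appeal to Lemma~\ref{lem:linear_combo} slightly cleaner, but the proof is the same.
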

Here we use $S_i$ to denote the sequence of the first $i$ elements of the sequence $S$, and for a sequence $S$ we use $h(S)$ to denote the value of the submodular function on the set of elements in $S$, independent of the order of the sequence.
\begin{proof}
Define $g_i' = g_i - g_{i+1}$ (where we use an additional term, $g_{k+1} = 0$, for notational convenience) and the sequence functions $h_i(S) = h(S_i)$, so that we can write $f(S) = \sum_{i=1}^k g_i' \cdot h_i(S)$. By monotonicity, $g_i' \ge 0$ for all $i$, so by Lemma~\ref{lem:linear_combo} it suffices to show that each $h_i(S)$ is ordered-submodular. But $h_i(S)$ is just a monotone submodular set function $h$ evaluated on a threshold of the first $i$ elements of $S$, so it is ordered-submodular by Lemma~\ref{lem:threshold_submodular}. Thus we conclude that $f$ is ordered-submodular.
\end{proof}



\section{Analysis of simple greedy algorithm}\label{sec:greedy_analysis}

The simple greedy algorithm for cardinality-constrained nonnegative ordered-submodular maximization works as follows: It initializes $A_0 = \emptyset$ (the empty sequence), and for $\ell = 1, 2, \dots, k$, it selects $A_\ell$ to be the sequence that maximizes $f(A)$ over all sequences obtained by appending an element to the end of $A_{\ell-1}$. In other words, it iteratively appends elements to the sequence $A$ one by one, each time choosing the element that leads to the greatest marginal increase in the value of $f$.

\begin{prop}\label{prop:2approx}
   The greedy algorithm for nonnegative ordered-submodular function maximization over sets of cardinality $k$ outputs a solution whose value is at least $\frac{1}{2}$ times that of the optimum solution.
\end{prop}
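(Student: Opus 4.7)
The plan is to compare the greedy sequence $S = s_1 \dots s_k$ to an optimal sequence $O = o_1 \dots o_k$ via the hybrid sequences
\[
H_i := S_i || o_{i+1} \dots o_k \qquad (i = 0, 1, \dots, k),
\]
setting $\phi_i := f(H_i)$. Then $\phi_0 = f(O)$ and $\phi_k = f(S)$, so proving $2\phi_k \ge \phi_0$ is equivalent to the desired approximation bound.

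First I would establish the step-wise inequality $g_i + \phi_i \ge \phi_{i-1}$ for every $i \in [k]$, where $g_i := f(S_i) - f(S_{i-1})$ is greedy's marginal at step $i$. Two ingredients are combined. Because greedy picks $s_i$ to maximize the marginal gain after $S_{i-1}$, we have $g_i \ge f(S_{i-1} || o_i) - f(S_{i-1})$. Ordered submodularity applied with $A = S_{i-1}$, $s = o_i$, $\bar s = s_i$, and $B = o_{i+1} \dots o_k$ then yields
\[
f(S_{i-1} || o_i) - f(S_{i-1}) \;\ge\; f(H_{i-1}) - f(H_i) \;=\; \phi_{i-1} - \phi_i,
\]
and chaining the two inequalities gives $g_i + \phi_i \ge \phi_{i-1}$.

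Summing over $i = 1, \dots, k$ telescopes both sides: the left side becomes $\sum_{i=1}^{k} g_i + \sum_{i=1}^{k} \phi_i$, the right side is $\sum_{i=1}^{k} \phi_{i-1}$, and cancelling the common terms $\phi_1 + \dots + \phi_{k-1}$ leaves $f(S) - f(\emptyset) + \phi_k \ge \phi_0$, i.e.\ $2 f(S) \ge f(O) + f(\emptyset)$. Nonnegativity of $f$ then gives $2 f(S) \ge f(O)$.

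The main conceptual obstacle I anticipate is picking the right application of ordered submodularity. The naive choice --- OS with $s = s_i$ and $\bar s = o_i$, directly comparing greedy's gain to the hybrid swap of $o_i$ for $s_i$ --- produces only $g_i \ge \phi_i - \phi_{i-1}$, whose sum telescopes to the vacuous $f(O) \ge 0$. The decisive step is instead to use OS to lower-bound the marginal of inserting $o_i$ after $S_{i-1}$ (taking $s = o_i$, $\bar s = s_i$) and then combine this with the greedy optimality of $s_i$; this flips the sign of the telescoping and produces the factor of~$2$.
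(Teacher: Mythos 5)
Your proposal is correct and is essentially the paper's own argument: the hybrid sequences $H_i = S_i\,\|\,o_{i+1}\dots o_k$ are exactly the sequences $A_i\,\|\,S^{i+1}$ in the paper's proof, the key application of ordered submodularity (with $s$ the optimal element, $\bar s$ the greedy element, and $B$ the remaining optimal suffix) is identical, and your telescoping of $g_i + \phi_i \ge \phi_{i-1}$ is just the unrolled form of the paper's induction $f(A_i\,\|\,S^{i+1}) \ge OPT - f(A_i)$. The only cosmetic difference is induction versus summation, plus your (accurate) remark about why the naive choice of $s$ and $\bar s$ fails.
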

\begin{proof}
    Denote the sequence of length $k$ maximizing $f$ as $S = s_1 s_2 \dots s_k$ and the sequence of length $k$ maximizing the marginal increase at each step as $A = a_1 a_2 \dots a_k$. We write $S^j = s_j s_{j+1} \dots s_k$ to denote the suffix of $S$ starting at element $s_j$.
    
    Let $OPT(k) = f(S)$, $ALG(k) = f(A)$, so that we seek to show that $ALG(k) \ge \frac{1}{2} OPT(k)$ for all $k$. We must bound the performance of the greedy algorithm by comparing it to the optimal solution. The key insight is to ask the following question at each step: if we must remain committed to all the greedily chosen elements so far, but make the same choices as the optimum for the rest of the elements, how much have we lost? 
    
    To answer this question, we show via induction that for all $i$, $$f(A_i || S^{i+1}) \ge OPT(k) - f(A_i).$$ 
    
    The base case of $i=0$ is trivial, as $f(A_0 || S^1) = f(S) = OPT(k) \ge OPT(k) - f(A_0)$. So suppose the claim is true for some $i$, and observe that by ordered submodularity we have 
    \begin{align*}
        f(A_i || s_{i+1}) - f(A_i) &\ge f(A_i || s_{i+1} || S^{i+2}) - f(A_i || a_{i+1} || S^{i+2}) \\
        &= f(A_i || S^{i+1}) - f(A_{i+1} || S^{i+2}), \\
        f(A_{i+1} || S^{i+2}) &\ge f(A_i || S^{i+1}) + f(A_i) - f(A_i || s_{i+1}).
    \end{align*}
    
    Applying first the induction hypothesis, then the fact that $f(A_{i+1}) \ge f(A_i || s_{i+1})$ by definition of the greedy algorithm, yields
    \begin{align*}
        f(A_{i+1} || S^{i+2}) &\ge (OPT(k) - f(A_i)) + f(A_i) - f(A_i || s_{i+1}) \\
        &= OPT(k) - f(A_i || s_{i+1}) \\
        &\ge OPT(k) - f(A_{i+1}),
    \end{align*}
    completing the induction.
    
    Finally, taking $i=k$ in the claim gives 
    $$f(A) \ge OPT(k) - f(A) \implies f(A) = ALG(k) \ge \frac{1}{2} OPT(k).$$
\end{proof}



\section{Application 1: Diversification for user coverage}\label{sec:coverage}

Suppose we are designing a movie recommender system which produces a single list of recommendations for a large number of users. Every user has some amount of \emph{patience}, representing the fact that users are only willing to scroll down so far before deciding that the list is unsatisfactory. We say that the system \emph{covers} a user if the user is able to find a movie that interests them before their patience expires; otherwise the user gives up on the system and simply walks away. The goal of the designer is to diversify the list of recommendations in order to maximize the number of users covered by the system. In this section, we formally define an objective function for this problem and show that it is ordered-submodular, allowing us to conclude that the greedy algorithm gives a factor of $2$ approximation for the coverage problem.

\subsection{Mathematical formulation}
In a realistic application, we may not expect to exactly know each individual user that will ever use the recommendation system; instead, we may only know a probability distribution over the types of users who will use the system. We may also not know with complete certainty that a movie will or will not interest a given user; we may only have an estimated probability that a movie interests a user of a certain type. To generalize our model to this randomized setting, we seek to maximize the expected number of users covered by the system, or equivalently, the probability that a randomly chosen user is covered.

Let $\pi$ represent the probability distribution over user types (so that $\pi_u$ is the probability that a random user has type $u$). Denote the probability that movie $m$ interests user type $u$ by $p_{m,u}$. Define $\theta_u$, the \emph{patience} of type $u$, as the number of recommendations that a user of type $u$ will consider before leaving the system (e.g., if $\theta_u = 2$, the system will cover $u$ only if they are interested by the first or second movie in the list). 
Then, the probability that the recommendation list $S=s_1 s_2\dots s_k$ covers a randomly chosen user from $\pi$ is $$f(S) = \sum_u \pi_u \left(1 - \prod_{j=1}^{\min \{\theta_u, |S|\}} (1 - p_{s_j, u}) \right),$$ where the inner expression is obtained as the complement of the probability that a user of type $u$ is \emph{not} satisfied before their patience expires or they reach the end of the list, whichever comes first. This is the objective function that we now seek to maximize.

\subsection{Demonstration of ordered submodularity}
The objective function is of the form $f(S) = \sum_u \pi_u f_u(S)$, where $$f_u(S) = 1 - \prod_{j=1}^{\min \{\theta_u, |S|\}} (1 - p_{s_j, u}).$$ Thus by Lemma~\ref{lem:linear_combo}, to show ordered submodularity, it suffices to fix $u$ and show that $f_u$ is ordered-submodular. But now observe that $f_u$ is a function of the set of the first $\theta_u$ elements only (since multiplication is commutative, and any elements indexed above $\theta_u$ are not included in the product). Further, the coverage expression on the right hand side is a submodular set function of the type studied by~\cite{agrawal2009}. So $f_u$ is a sequence function defined by imposing a threshold $\theta_u$ on a submodular set function $h$, which is ordered-submodular by Lemma~\ref{lem:threshold_submodular}. Therefore, we conclude that the overall function $f$ is ordered-submodular. 

\begin{thm}
The user coverage function parametrized by user probability distribution $\pi$, movie satisfaction probabilities $\{p_{m,u}\}$, and patience values $\{\theta_u\}$, $$f(S) = \sum_u \pi_u \left(1 - \prod_{j=1}^{\min \{\theta_u, |S|\}} (1 - p_{s_j, u}) \right),$$ is ordered-submodular. Thus, the greedy algorithm produces a ranked list covering at least $\frac{1}{2}$ as many users as the optimal ranked list.
\end{thm}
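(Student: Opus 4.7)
The plan is to reduce the theorem to the three building blocks already proved: Lemma~\ref{lem:linear_combo} (nonnegative combinations preserve ordered submodularity), Lemma~\ref{lem:threshold_submodular} (thresholding a monotone submodular set function yields an ordered-submodular sequence function), and Proposition~\ref{prop:2approx} (greedy is a $2$-approximation for any nonnegative ordered-submodular objective). The presentation right before the theorem already sketches this decomposition, so my job is mainly to make each step explicit.

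First I would write $f(S) = \sum_u \pi_u f_u(S)$ with
\[
f_u(S) = 1 - \prod_{j=1}^{\min\{\theta_u, |S|\}} (1 - p_{s_j, u}).
\]
Since $\pi_u \ge 0$ for all $u$, Lemma~\ref{lem:linear_combo} applied iteratively tells us it suffices to prove that each $f_u$ is ordered-submodular. So I would fix a user type $u$ and focus on $f_u$ alone.

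Next I would exploit the fact that multiplication is commutative: the product in $f_u(S)$ depends only on the \emph{set} of movies occupying the first $\min\{\theta_u, |S|\}$ positions of $S$, not on their internal ordering. Define the set function
\[
h_u(T) = 1 - \prod_{m \in T}(1 - p_{m,u}).
\]
This is the standard probabilistic coverage function from~\cite{agrawal2009}, which is well known to be monotone and submodular (it is a weighted sum of coverage functions, one for each "independent success event"). Then $f_u(S) = h_u(S)$ whenever $|S| \le \theta_u$, and $f_u(S) = h_u(S_{\theta_u})$ whenever $|S| > \theta_u$, which is exactly the thresholded construction of Lemma~\ref{lem:threshold_submodular} with $t = \theta_u$. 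Applying that lemma gives ordered submodularity of $f_u$, and combining over $u$ via Lemma~\ref{lem:linear_combo} gives ordered submodularity of $f$.

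Finally, since $f$ is nonnegative (each $f_u$ is a probability, and $\pi_u \ge 0$), I would invoke Proposition~\ref{prop:2approx} to conclude that the greedy algorithm returns a list of value at least $\tfrac12 \cdot \mathrm{OPT}$, which is the claimed approximation guarantee. The only real substance is recognizing that thresholding captures the cutoff imposed by the finite patience $\theta_u$, so that I can recycle the set-submodularity already proved in~\cite{agrawal2009}; everything else is bookkeeping. The main conceptual obstacle would have been proving ordered submodularity of the thresholded coverage function directly, but Lemma~\ref{lem:threshold_submodular} already does this work for us, so no further technical difficulty arises.
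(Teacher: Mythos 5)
Your proposal is correct and follows essentially the same route as the paper: decompose $f$ as a nonnegative combination of the per-user functions $f_u$ via Lemma~\ref{lem:linear_combo}, recognize each $f_u$ as the monotone submodular coverage function of~\cite{agrawal2009} thresholded at $t=\theta_u$ so that Lemma~\ref{lem:threshold_submodular} applies, and invoke Proposition~\ref{prop:2approx} for the greedy guarantee. Your write-up is if anything slightly more explicit than the paper's, since you spell out the set function $h_u$ and the nonnegativity needed for the approximation claim.
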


\subsection{Greedy approximation ratio of 2 is tight}
A simple example in this setting shows that we can do no better than a factor of $2$ approximation using the greedy algorithm.

Suppose there are two user types, $1$ and $2$, with $(\pi_1, \pi_2) = \left(\frac{1}{2}, \frac{1}{2}\right)$, $\theta_1 = 1$, and $\theta_2 = 2$. There are also two movies, $s_1$ and $s_2$, with $p_{s_1, 1} = p_{s_2, 2} = 1$, $p_{s_1, 2} = p_{s_2, 1} = 0$. We seek to generate a recommendation list of length $2$ (i.e., to rank the two movies in order).

Since $f(s_1) = \pi_1 = \frac{1}{2}$, $f(s_2) = \pi_2 = \frac{1}{2}$, the greedy algorithm may choose arbitrarily between $s_1$ and $s_2$; suppose it chooses $s_2$ first.\footnote{We may also perturb the probabilities by an arbitrarily small amount $\varepsilon$ so that $\pi_1 < \pi_2$, but we make the standard assumption of arbitrary tiebreaking for a cleaner proof of the same result.} It then chooses $s_1$ in the second step, but obtains no additional value since $s_1$ only interests user type $1$, but user type $1$ will not look at the second movie in the list. Then $$ALG = f(s_2 s_1) = \pi_1 \cdot 0 + \pi_2 \cdot 1 = \frac{1}{2}.$$
But the optimal list would place $s_1$ ahead of $s_2$, which first covers user type $1$ before their patience expires, then covers user type $2$, giving 
\begin{align*}
    OPT = f(s_1 s_2) &= \pi_1 \cdot 1 + \pi_2 \cdot 1 = 1,
\end{align*}
so $ALG/OPT = \frac{1}{2}$ exactly.

This example can be extended to a recommendation list of arbitrary length $k$ by defining $k$ user types with $\pi_i = \frac{1}{k}$, $\theta_i = i$ (for $i = 1, 2, \dots, k$) and $k$ movies $s_j$ (for $j = 1, 2, \dots, k$) with $p_{s_j, i} = 1$ if $j = i$ and $p_{s_j, i} = 0$ otherwise.

The optimal list is $s_1 s_2 \dots s_k$, which covers each user type exactly before their patience expires, giving $OPT = 1$. Meanwhile, via induction on the iterations we see that the greedy algorithm can
choose the movies in reverse order, producing the list $s_k s_{k-1} \dots s_1$. Then only movies $s_k$ through $s_{k/2 + 1}$ will be able to interest their corresponding user type (for simplicity suppose $k$ is even); for movies $s_{k/2}$ through $s_1$, their corresponding user type will walk away before they are covered. So we have $$ALG = \sum_{i=1}^{k/2} \pi_i \cdot 0 + \sum_{i=k/2 + 1}^k \pi_i \cdot 1 = \frac{k}{2} \cdot \frac{1}{k} = \frac{1}{2}.$$
Again, $ALG/OPT = \frac{1}{2}$ exactly, establishing that the greedy approximation ratio of $2$ is tight.

\begin{thm}
There exist instances of ordered-submodular optimization problems on which the greedy algorithm achieves exactly $\frac{1}{2}$ of the optimal value. Thus, the 2-approximation performance bound is tight.
\end{thm}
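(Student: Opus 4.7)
The plan is to exhibit a concrete family of ordered-submodular instances on which the ratio $\text{ALG}/\text{OPT}$ equals exactly $\tfrac12$, thereby showing the $2$-approximation of Proposition~\ref{prop:2approx} cannot be improved. The cleanest source of such instances is the user coverage problem from Section~\ref{sec:coverage}, since Theorem~7 already guarantees ordered submodularity for free and we only need to tune the parameters. First I would warm up with a $k=2$ example: two equiprobable user types with patiences $\theta_1 = 1$, $\theta_2 = 2$, and two movies where $s_i$ satisfies user type $i$ with probability $1$ and the other type with probability $0$. Both singletons have $f$-value $\tfrac12$, so an adversarial tiebreak lets greedy pick $s_2$ first; then $s_1$ adds nothing, since the only user it interests has already lost patience. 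Thus $\text{ALG} = \tfrac12$, while $\text{OPT} = f(s_1 s_2) = 1$.

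Next I would scale this to arbitrary $k$: take $k$ user types with $\pi_i = 1/k$ and $\theta_i = i$, and $k$ movies with $p_{s_j,i} = \mathbbm{1}[i=j]$. The optimal list is the identity ordering $s_1 s_2 \dots s_k$, which covers every user type exactly at their patience threshold, giving $\text{OPT} = 1$. For the greedy side, I need to justify that an adversarial execution can produce the reversed order $s_k s_{k-1} \dots s_1$. This is where the only subtlety lies: at each iteration, all still-unselected movies contribute the same marginal gain (each covers a single distinct, still-uncovered user type with probability $1/k$, provided that type's patience has not been exhausted), so ties can be broken in the worst-possible direction, appending the movie whose corresponding user type has the largest remaining patience. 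Under this tiebreak rule, the chosen order is $s_k, s_{k-1}, \dots, s_1$. If one prefers to avoid appealing to adversarial tiebreaking, the same conclusion follows by perturbing the $\pi_i$ by infinitesimals that slightly favor higher-index types.

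Finally I would compute $\text{ALG}$ directly: in the reversed list, movie $s_j$ sits at position $k - j + 1$, so it covers user type $j$ iff $k - j + 1 \le j$, i.e. $j \ge \lceil (k+1)/2 \rceil$. Summing the $1/k$ contributions over these $\lceil k/2\rceil$ indices (and taking $k$ even for a clean bound) gives $\text{ALG} = 1/2$, matching $\text{OPT}/2$. The main obstacle, such as it is, is the tiebreaking justification in the greedy step; once that is handled, everything else is arithmetic on top of the already-established ordered submodularity of the coverage objective, and combining this family with Proposition~\ref{prop:2approx} yields the theorem.
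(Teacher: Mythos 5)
Your proposal is correct and follows essentially the same construction as the paper: the $k=2$ warm-up with patiences $\theta_1=1,\theta_2=2$, the general instance with $\pi_i = 1/k$, $\theta_i = i$, and indicator satisfaction probabilities, adversarial tiebreaking (or perturbation) forcing the reversed greedy order, and the resulting $ALG = \tfrac12 = \tfrac12\, OPT$. Your explicit justification of the tiebreaking at each greedy iteration is slightly more detailed than the paper's brief appeal to induction, but the argument is the same.
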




\section{Application 2: Calibration in personalized recommendations}\label{sec:calibration}

We now consider the setting of \emph{personalized} recommendations, which generates a tailored list of recommendations for each individual user based on their historical preferences. Much research on personalized recommender systems has worked toward improving prediction accuracy (e.g., how many of the recommended items are indeed relevant to the user), but training solely toward accuracy metrics can actually be detrimental to the performance of the system. For instance, recommendation lists focused only on accuracy may suffer from a lack of diversity or novelty~\cite{McNee2006}. Another important metric in machine learning is \emph{calibration}, the degree to which the predicted proportions of the various classes align with the true proportions of the classes in the existing data. From the user's perspective, a recommendation list is calibrated if it closely reflects their various interests in appropriate proportions. This a desirable additional objective when optimizing the user experience; for instance, a user would likely want the system to preserve their minor interests, rather than entirely ``crowding them out'' in favor of major interests only.

Steck (2018) considers the problem of creating calibrated recommendations using the language of \emph{movies} as the items with which users interact, and \emph{genres} as the classes of items~\cite{Steck18}. Each user has a preference distribution over genres that can be inferred from their previous activity, and the goal is to recommend a list of movies whose genres reflect these preferences (possibly also incorporating a ``quality'' score for each movie, representing its general utility or relevance). In our work, we adopt Steck's formulation of distributions over genres, which we describe below.

Suppose that each movie $i$ has a distribution over genres $g$, given by $p(g|i)$. For a user $u$, we consider two induced distributions: one from the list of movies $\mathcal{H}$ that user $u$ has played in the past, and one from the list of movies $\mathcal{I}$ that the system recommends to user $u$:
\begin{itemize}
    \item $p(g|u)$, the distribution over genres $g$ played by user $u$ in the past: $$p(g|u) = \frac{\sum_{i\in \mathcal{H}} w_{u,i} \cdot p(g|i)}{\sum_{i\in \mathcal{H}} w_{u,i}},$$ where $w_{u,i}$ is the weight of movie $i$ (e.g., how recently it was played by user $u$),
    
    \item $q(g|u)$, the distribution over genres $g$ recommended to user $u$: $$q(g|u) = \frac{\sum_{i\in \mathcal{I}} w_{r(i)} \cdot p(g|i)}{\sum_{i\in \mathcal{I}} w_{r(i)}},$$ where $w_{r(i)}$ is the weight of movie $i$ due to its rank $r(i)$.
\end{itemize}

In general, Steck does not provide much guidance on how the weights are intended to be chosen and interpreted in the context of the greedy algorithm. For our purposes, we suppose the weights are weakly decreasing in rank (i.e., $w_a \ge w_b$ if $a < b$). We also suppose that the desired length of the recommendation list is a fixed constant $k$ (i.e., $|\mathcal{I}| = k$) and $\sum_{j=1}^k w_i = 1$. This assumption is without loss, even with the more typical cardinality constraint that the list may have length \emph{at most} $k$ --- we simply linearly consider each possible length $\ell \in [1,k]$, renormalize so that the first $\ell$ weights sum to $1$, and perform the optimization. We then take the maximally calibrated list over all $k$ length-optimal lists. 

The goal of the \emph{calibrated recommendations} problem is to choose $\mathcal{I}$ such that $q$ is ``close'' to $p$. To quantify this concept of closeness between distributions, we introduce the formalism of \emph{overlap measures}.

\subsection{Overlap measures}
For the discussion that follows, we restrict to finite discrete probability spaces $\Omega$ for simplicity, although the concepts can be generalized to continuous probability measures.

A common tool for quantitatively comparing distributions is statistical divergences, which measure the ``distance'' from one distribution to another. A divergence $D$ has the property that $D(p,q) \ge 0$ for any two distributions $p,q$, with equality attained if and only if $p=q$. This means that divergences cannot directly be used to measure calibration, which we think of as a non-negative metric that is uniquely \emph{maximized} when $p=q$. Instead, we define a new but closely related tool that we call \emph{overlap}, which exactly satisfies the desired properties. 

Since divergences have a number of well-studied properties and applications, it is useful to consider overlap measures derived from divergences. We note that~\cite{Steck18} does a version of this, modifying the KL divergence into a \emph{maximum marginal relevance} objective function. However, this proposed objective function may be either positive or negative (see Appendix~\ref{sec:bad_netflix} for an example), meaning that it cannot be used in our greedy algorithm---and in fact, the concept of approximation guarantees is not even well-specified for functions of variable sign. In contrast, our abstraction of overlap measures satisfies non-negativity for all pairs of distributions $p,q$.

Our definition is also more general in two important ways. First, we do not limit ourselves to the KL divergence, so that other divergences and distances with useful properties may be used (one such example is the Hellinger distance, $H(p,q) = \frac{1}{\sqrt{2}} ||\sqrt{p} - \sqrt{q} ||_2$, which forms a bounded metric and has a convenient geometric interpretation using Euclidean distance). Second, in our definition $q$ may be any \emph{subdistribution}, a term we use to denote a vector of probabilities summing to \emph{at most} $1$. This is useful because our greedy algorithm incrementally constructs $q$ from the $0$ vector by adding a new movie (weighted by its rank), so in each iteration we must compute the overlap between the true distribution $p$ and the partially constructed subdistribution $q$.

With these considerations in mind, we now proceed to define overlap measures.

\begin{defn}[Overlap measure]
An \textbf{overlap measure} $G$ is a function on pairs of distributions and subdistributions $(p,q)$ with the properties that
\begin{enumerate}[(i)]
    \item $G(p,q) \ge 0$ for all distributions $p$ and subdistributions $q$,
    \item For any fixed $p$, $G(p,q)$ is uniquely maximized at $q=p$.
\end{enumerate}
\end{defn}

Further, we observe that overlap measures can be constructed based on distance functions.

\begin{defn}[Distance-based overlap measure]
Let $d(p,q)$ be a bounded distance function on the space of distributions $p$ and subdistributions $q$ with the property that $d(p,q) \ge 0$ with $d(p,q) = 0$ if and only if $p=q$. Denote by $d^*$ the maximum value attained by $d$ over all pairs $(p,q)$.

Then, the \textbf{$d$-overlap measure} $G_d$ is defined as $$G_d(p,q) \coloneqq d^* - d(p,q).$$
\end{defn}

Many classical distances are originally defined on pairs of distributions $(p,q)$, but admit explicit functional forms that can be evaluated using the values of $p(x)$ and $q(x)$ for all $x \in \Omega$. This allows us to compute $d(p,q)$, and consequently $G_d(p,q)$, when $q$ is a subdistribution. Now, it is clear that $G_d$ indeed satisfies both properties of an overlap measure: property (i) follows from the definition of $d^*$, and property (ii) follows from the unique minimization of $d$ at $q=p$.

\subsection{Families of ordered-submodular overlap measures}
Any overlap measure produces a version of the calibrated recommendations problem (and the corresponding approximation problem), since for the user's target distribution $p(g|u)$, we seek to calibrate the recommended genre distribution $q(g|u)$ to equal $p(g|u)$, which maximizes the overlap $G(p,q)$. But to execute our greedy algorithm, we are interested in overlap measures that give rise to an ordered-submodular optimization problem in particular. As discussed in Section~\ref{sec:ordsub_defn}, it suffices to study conditions under which the resulting calibration heuristic exhibits diminishing marginal returns and monotonicity (with respect to filling in any position in the sequence that is currently the formal ``null'' with a new item). 

As before, we are also interested in divergence-based overlap measures. One of the main classes of divergences is the family of $f$-divergences, which are generated from functions $f(t)$ that are convex on $t \ge 0$ with $f(1) = 0$. Given such a function $f$, the $f$-divergence of $p$ from $q$ (alternatively, ``from $q$ to $p$'') is defined as $$D_f(p,q) \coloneqq \sum_{x\in \Omega} f\left( \frac{p(x)}{q(x)}\right) q(x).$$ We show that in general, $f$-divergences yield ordered-submodular problems to which we can apply our greedy algorithm.

Recall that $p$ is a given as fixed, while $q$ is constructed incrementally as $q(g|u) = \sum_{i\in \mathcal{I}} w_{r(i)} \cdot p(g|i)$. For any genre $g$, since the $p(g|i)$ are non-negative probabilities, adding a movie $i$ to the sequence always (weakly) increases $q(g|u)$. Then, in the execution of the greedy algorithm, adding $i$ later in the sequence adds onto a larger accumulated value of $q(g|u)$. If $g(y) \coloneqq f\left(\frac{c}{y}\right) y$ is convex for $y > 0$, then $D_f(p,q) = \sum_{y = q(x)} g(y)$ displays ``increasing marginal returns'' with respect to the sequence of movies, and $G_{D_f}(p,q)$ will display decreasing marginal returns.

Indeed, we can verify that we have 
\begin{align*}
    g'(y) &= -f'\left(\frac{c}{y}\right) \cdot \frac{c}{y^2} \cdot y + f\left(\frac{c}{y}\right) \cdot 1 = f\left(\frac{c}{y}\right) - f'\left(\frac{c}{y}\right) \frac{c}{y}, \\
    g''(y) &= -f''\left(\frac{c}{y}\right) \cdot \frac{c}{y^2} + f''\left(\frac{c}{y}\right) \cdot \frac{c}{y^2} \cdot \frac{c}{y} + f'\left(\frac{c}{y}\right) \cdot \frac{c}{y^2} \\
    &= f''\left(\frac{c}{y}\right) \frac{c}{y^3} \\
    &\ge 0,
\end{align*}
since $c,y, f''\left(\frac{c}{y}\right)  \ge 0$ (by definition of convexity of $f$). Consequently, any \emph{bounded} $f$-divergence results in a $D_f$-overlap measure with the diminishing returns property.

Further, we note that many $D_f$-overlap measures based on commonly used $f$-divergences satisfy the monotonicity property. As a concrete example, consider the squared Hellinger distance (obtained by choosing $f(t) = (\sqrt{t}-1)^2$ or $f(t) = 2(1-\sqrt{t})$), which is of the form $$H^2(p,q) = \frac{1}{2} \sum_{x\in \Omega} (\sqrt{p(x)} - \sqrt{q(x)})^2 = 1 - \sum_{x\in \Omega} \sqrt{p(x) \cdot q(x)}.$$ The resulting $H^2$-overlap measure is $$G_{H^2} (p,q) = \sum_{x\in \Omega} \sqrt{p(x) \cdot q(x)}.$$ Clearly, if subdistribution $q$ coordinate-wise dominates subdistribution $q'$, then $G_{H^2}(p,q) \ge G_{H^2}(p,q')$ for all $p$, which establishes the monotonicity property.

Taking these two desired properties together, we have demonstrated that our notion of overlap measures works well with $f$-divergences to create a general family of ordered-submodular calibration problems.

\begin{thm} \label{thm:overlap_fdiv}
Given any bounded $f$-divergence $D_f$ with maximum value $d^* = \max_{(p,q)} d(p,q)$, the corresponding $D_f$-overlap measure $$G_{D_f}(p,q) = d^* - D_f(p,q)$$ is ordered-submodular. Thus, the greedy algorithm provides a $2$-approximation for calibration heuristics using overlap measures of this form.
\end{thm}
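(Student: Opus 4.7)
The plan is to reduce ordered submodularity of $G_{D_f}(p, q(S))$ to a per-coordinate scalar inequality, chain the excerpt's convexity calculation with a monotonicity argument, and then invoke Lemma~\ref{lem:linear_combo} and Proposition~\ref{prop:2approx}.

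First, I would decompose $D_f(p,q) = \sum_x g_x(q(x))$ with $g_x(y) := f(p(x)/y)\,y$, and note that the induced subdistribution $q(S)(x) = \sum_{i=1}^{|S|} w_i\,p(x|s_i)$ depends linearly on the sequence. Then $G_{D_f}(p, q(S)) = d^* - \sum_x g_x(q(S)(x))$, and since additive constants contribute trivially to ordered submodularity, Lemma~\ref{lem:linear_combo} reduces the theorem to showing that each per-coordinate sequence function $F_x(S) := -g_x(q(S)(x))$ is ordered-submodular.

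Fix a coordinate $x$, write $\phi := -g_x$, and for arbitrary sequences $A,B$ and candidates $s, \bar s$ abbreviate $v := q(A)(x)$, $\sigma := w_{|A|+1}\,p(x|s)$, $\bar\sigma := w_{|A|+1}\,p(x|\bar s)$, and $r := \sum_{i=1}^{|B|} w_{|A|+1+i}\,p(x|B_i)$ (all non-negative). The ordered-submodularity inequality for $F_x$ becomes the scalar claim
$$
\phi(v + \sigma) - \phi(v) \;\ge\; \phi(v + \sigma + r) - \phi(v + \bar\sigma + r).
$$
I would prove this by chaining two steps. The excerpt's computation $g_x''(y) = f''(c/y)\,c/y^3 \ge 0$ makes $\phi$ concave, so the standard diminishing-returns bound gives $\phi(v + \sigma) - \phi(v) \ge \phi(v + \sigma + r) - \phi(v + r)$. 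Monotonicity of $\phi$ then gives $\phi(v + r) \le \phi(v + \bar\sigma + r)$, equivalently $\phi(v + \sigma + r) - \phi(v + \bar\sigma + r) \le \phi(v + \sigma + r) - \phi(v + r)$. Chaining these two yields the scalar claim. Summing over $x$ gives ordered submodularity of $G_{D_f}$, after which Proposition~\ref{prop:2approx} delivers the $2$-approximation.

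The hardest step is establishing monotonicity of $\phi$. The convexity calculation shows $\phi$ is concave but does not pin down the sign of its derivative: $g_x'(y) = f(c/y) - (c/y)\,f'(c/y)$ is non-positive precisely when $f(z)/z$ is non-decreasing, which must be verified for each particular divergence rather than deduced from convexity of $f$ alone. I would handle this the same way the excerpt handles Hellinger: for a bounded $f$-divergence of interest, identify the natural coordinate-wise extension of the overlap measure to subdistributions (in the Hellinger case, $\phi_x(y) = \sqrt{p(x)\,y}$, manifestly both concave and non-decreasing) and work with that extension in place of the literal formula $d^* - D_f$, so that $\phi$ really is non-decreasing and the chaining argument closes.
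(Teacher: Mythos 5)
Your argument is essentially the paper's own: the paper likewise decomposes $D_f(p,q)$ coordinate-wise, computes $g''(y) = f''(c/y)\,c/y^3 \ge 0$ to obtain the diminishing-returns half of the inequality, and then appeals to monotonicity of the coordinate-wise map to handle the replacement of $s$ by $\bar s$ --- exactly your two-step chaining, which also mirrors the structure of the proof of Lemma~\ref{lem:threshold_submodular}. Your explicit scalar reduction via $v$, $\sigma$, $\bar\sigma$, $r$, and the observation that the additive constant $d^*$ and the sum over coordinates pass through Lemma~\ref{lem:linear_combo}, are a cleaner rendering of what the paper leaves implicit.

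The step you flag as hardest is in fact a genuine gap --- in the theorem as stated, not in your reasoning. Convexity of $f$ yields concavity of $\phi = -g_x$ but says nothing about the sign of $\phi'(y) = -f(c/y) + (c/y)f'(c/y)$, and the paper itself only ``notes that many $D_f$-overlap measures based on commonly used $f$-divergences satisfy the monotonicity property,'' verifying it solely for the squared Hellinger case. The claim for \emph{every} bounded $f$-divergence fails: take total variation, $f(t) = \tfrac{1}{2}|t-1|$, so that $g_x(y) = \tfrac{1}{2}|p(x)-y|$ and $\phi$ increases up to $y = p(x)$ and decreases thereafter. Setting $\bar s = s$ in the definition of ordered submodularity forces the induced sequence function to be nondecreasing under appending any element, yet with $p = (\tfrac12,\tfrac12)$, equal weights $w_1 = w_2 = \tfrac12$, and two movies concentrated entirely on genre $1$, the overlap $1 - \mathrm{TV}(p,q)$ drops from $\tfrac34$ to $\tfrac12$ when the second movie is appended. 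So your instinct to treat monotonicity of $\phi$ as an additional hypothesis to be verified per divergence --- or to replace the literal formula $d^* - D_f$ by a coordinate-wise monotone extension, as both you and the paper do for Hellinger --- is the correct repair; with that hypothesis added, your proof closes and Proposition~\ref{prop:2approx} applies as you say.
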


Inspired by the squared Hellinger-based overlap measure, we also consider the construction of another general family of overlap measures of the form $$G(p,q) = \sum_{x\in \Omega} g_1(p(x)) \cdot g_2(q(x)),$$ for nonnegative functions $g_1$ and $g_2$. Here, $G$ is ordered-submodular with respect to the recommendation list as long as $g_2$ is any non-decreasing (corresponding to monotonicity) and concave (corresponding to diminishing returns) function. Given such a $g_2$, we fully specify the overlap measure by choosing $g_1$ such that $G$ is uniquely maximized when $q=p$.

That is, we consider the constrained maximization of $\sum_{i=1}^g g_1(p_i) \cdot g_2 (q_i)$, subject to $\sum_{i=1}^g q_i \le 1$. By placing a Lagrange multiplier of $\lambda$ on the constraint, we see that the maximum occurs when $$g_1 (p_i) \cdot g_2' (q_i) = \lambda$$ for all $i$. Since we would like this to be satisfied if $q_i = p_i$ for all $i$, and we can scale the overlap measure by a multiplicative constant without loss, it suffices to set $g_1$ identically to $\frac{1}{g_2'}$. This gives rise to a second family of ordered-submodular overlap measures that includes a wide range of general forms, and also remains easy to compute. 

\begin{thm} \label{thm:overlap_g1g2}
Given any nonnegative non-decreasing concave function $f$, the overlap measure $$G(p,q) = \sum_{x\in \Omega} \frac{f(q(x))}{f'(p(x))}$$ is ordered-submodular. Thus, the greedy algorithm provides a $2$-approximation for calibration heuristics using overlap measures of this form.
\end{thm}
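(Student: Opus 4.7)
My plan is to reduce the verification of ordered submodularity of $G$ to a one-variable inequality for each coordinate $x \in \Omega$, and then chain the concavity and monotonicity of $f$. Writing $G(p, q_S) = \sum_{x \in \Omega} \frac{1}{f'(p(x))}\, f(q_S(x))$ as a sum of sequence functions $\phi_x(S) := f(q_S(x))$ scaled by the sequence-independent nonnegative weights $1/f'(p(x))$, repeated application of Lemma~\ref{lem:linear_combo} reduces the problem to showing that each $\phi_x$ is ordered-submodular. (Coordinates where $f'(p(x)) = 0$ can be handled by the convention that the corresponding term is dropped, since concavity and monotonicity of $f$ force $f$ to be constant on the relevant range, rendering $\phi_x$ trivially ordered-submodular.)

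Next I fix a coordinate $x$ and compute the induced values. For a sequence $S = s_1 \cdots s_\ell$ the subdistribution at $x$ is $q_S(x) = \sum_{j=1}^{\ell} w_j\, p(x \mid s_j)$, where each weight $w_j$ depends only on the rank $j$, not on the element occupying that rank. Introducing the shorthand $a = q_A(x)$, $\sigma = w_{|A|+1}\,p(x \mid s)$, $\bar\sigma = w_{|A|+1}\,p(x \mid \bar s)$, and $\beta = \sum_{j=1}^{|B|} w_{|A|+1+j}\,p(x \mid B_j)$ (all nonnegative), the four sequences $A$, $A||s$, $A||s||B$, and $A||\bar s||B$ have induced values at $x$ equal to $a$, $a + \sigma$, $a + \sigma + \beta$, and $a + \bar\sigma + \beta$, respectively. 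Crucially, $\beta$ is the \emph{same} in both the $s$ and $\bar s$ cases because swapping the element at rank $|A|+1$ does not change the ranks (and thus the weights) of the elements of $B$. The ordered-submodularity inequality for $\phi_x$ therefore reduces to
$$ f(a + \sigma) - f(a) \;\geq\; f(a + \sigma + \beta) - f(a + \bar\sigma + \beta). $$

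I would establish this bound by chaining two elementary inequalities. First, concavity of $f$ gives the diminishing-returns property that the increment $f(\,\cdot\, + \sigma) - f(\,\cdot\,)$ is non-increasing in its argument; applied at the base points $a \le a + \beta$, this yields $f(a+\sigma) - f(a) \geq f(a + \sigma + \beta) - f(a + \beta)$. Second, monotonicity of $f$ together with $\bar\sigma \ge 0$ gives $f(a + \beta) \le f(a + \bar\sigma + \beta)$, hence $f(a + \sigma + \beta) - f(a + \beta) \geq f(a + \sigma + \beta) - f(a + \bar\sigma + \beta)$. Chaining these two steps closes the coordinate-wise argument.

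Nonnegativity of $G$ is immediate from $f \ge 0$ and $f' \ge 0$, so Proposition~\ref{prop:2approx} applies directly to yield the 2-approximation once ordered submodularity is in hand. I do not foresee a significant obstacle: the proof is essentially a coordinate-wise decomposition paired with the two structural properties of $f$. The only mildly delicate point is the bookkeeping ensuring that the weights $w_{|A|+1+j}$ attached to elements of $B$ are unchanged when $s$ is swapped for $\bar s$ — this is exactly what makes $\beta$ cancel in the concavity step and is where the rank-based (rather than identity-based) weighting scheme from the framework of Section~\ref{sec:ordsub_defn} is used essentially.
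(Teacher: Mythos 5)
Your proof is correct and follows essentially the same route as the paper: decompose $G$ coordinate-wise into nonnegatively weighted terms via Lemma~\ref{lem:linear_combo}, then use concavity of $f$ for the diminishing-returns half of the inequality and monotonicity of $f$ for the other half. In fact your writeup makes explicit several details the paper leaves implicit (the rank-based bookkeeping showing $\beta$ is identical for the $s$ and $\bar s$ sequences, and the degenerate case $f'(p(x))=0$), so no gap remains.
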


As a concrete example, taking $f(x) = x^\alpha$ for $\alpha \in (0,1)$ gives $\frac{1}{f'(x)} = \alpha x^{1-\alpha}$, which produces the (scaled) overlap measure $$G(p,q) = \sum_{x\in \Omega} p(x)^{1-\alpha} q(x)^\alpha.$$ 
Observe that the natural special case of $\alpha = \frac{1}{2}$ gives $f(x) = \frac{1}{f'(x)} = \sqrt{x}$, providing an alternate construction that recovers the squared Hellinger-based overlap measure.

Finally, we note that until this point our discussion has focused only on calibration, but in practice the recommendations should also have high quality (utility, relevance, etc.). To address this, we can model the overall quality of a list as the sum of quality scores of the individual movies (as in~\cite{ashkan2015, Steck18}), and then optimize for a weighted sum of quality and calibration. As a straightforward sum of scores, the quality metric is modular (and hence ordered-submodular). Then by Lemma~\ref{lem:weighted_submodular} the combination of quality and calibration remains ordered-submodular, and our approximation results still hold.

\begin{cor}
The calibrated recommendations problem with combined quality and calibration metrics is ordered-submodular, and thus admits a 2-approximation via the greedy algorithm.
\end{cor}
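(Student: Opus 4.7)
The plan is to view the combined objective as a nonnegative linear combination $F = \alpha Q + \beta G$ of a quality component $Q$ and a calibration component $G$, show each summand is ordered-submodular separately, and then invoke Lemma~\ref{lem:linear_combo} followed by Proposition~\ref{prop:2approx}.

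For the calibration component $G$, I would cite Theorem~\ref{thm:overlap_fdiv} (for any bounded $f$-divergence) or Theorem~\ref{thm:overlap_g1g2} (for any nonnegative non-decreasing concave function), whichever overlap measure the designer adopts; both of these results already establish that the calibration heuristic $G(p, q(\cdot\,|\,u))$, viewed as a sequence function of the recommended list, is ordered-submodular. For the quality component modeled as the sum of per-movie quality scores $Q(S) = \sum_{i \in S} q_i$ with $q_i \ge 0$, I would verify the ordered-submodularity inequality directly from the definition: $Q(A||s) - Q(A) = q_s$ while $Q(A||s||B) - Q(A||\bar s||B) = q_s - q_{\bar s}$, so the required inequality $q_s \ge q_s - q_{\bar s}$ reduces to $q_{\bar s} \ge 0$, which holds by nonnegativity of quality scores. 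If instead one prefers a rank-discounted quality $Q(S) = \sum_i w_i q_{s_i}$ with the $w_i$ weakly decreasing (matching the attention-decay weights already used for $G$), the same conclusion follows by writing $Q$ in the form required by Lemma~\ref{lem:weighted_submodular}, with the inner monotone submodular set function taken to be the modular $h(T) = \sum_{j \in T} q_j$.

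With both summands ordered-submodular, Lemma~\ref{lem:linear_combo} implies that $F = \alpha Q + \beta G$ is ordered-submodular for any $\alpha, \beta \ge 0$, and the $2$-approximation guarantee for the greedy algorithm then follows immediately from Proposition~\ref{prop:2approx}. There is no substantive obstacle here, since the corollary is essentially a synthesis of previously established results. The only point that warrants some care is matching whatever quality model one adopts (unweighted sum, rank-weighted sum, or a more general nonnegative modular or submodular scoring rule) to the precise hypotheses of the lemma one invokes, so that the quality term is formally ordered-submodular before it is combined with the calibration term.
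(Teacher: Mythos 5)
Your proposal is correct and follows essentially the same route as the paper: the quality term is a nonnegative modular (hence ordered-submodular) function, the calibration term is ordered-submodular by Theorems~\ref{thm:overlap_fdiv} or~\ref{thm:overlap_g1g2}, and the nonnegative combination is ordered-submodular, so Proposition~\ref{prop:2approx} applies. If anything, your write-up is slightly more careful than the paper's one-paragraph argument --- you correctly invoke Lemma~\ref{lem:linear_combo} for the linear combination (the paper cites Lemma~\ref{lem:weighted_submodular} here, which is really the lemma for rank-weighted variants) and you make explicit that nonnegativity of the quality scores is what makes the modular term ordered-submodular.
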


\paragraph{Remark.} Since we have established in Theorems~\ref{thm:overlap_fdiv} and \ref{thm:overlap_g1g2} that the calibration heuristic can be formulated as ordered-submodular, our result in Proposition~\ref{prop:2approx} implies that the greedy algorithm gives us a 2-approximation to the calibrated recommendations problem. But unlike in Section~\ref{sec:calibration}, where we show that this factor of $2$ is tight for the coverage problem, we do not have a corresponding tight instance for calibration. \etcomment{I would have said "we do not have" (not just do not present} \ercomment{changed present -> have} In simulation, the greedy algorithm performs nearly optimally on most instances of the calibration problem; however, we leave stronger theoretical guarantees as an open problem.



\section{Discussion} \label{sec:discussion}

In this paper, we have presented a new definition of \emph{ordered submodularity}, which extends the traditional notion of set submodularity to a class of optimization problems in which the order of elements matters, because elements contribute differently based on their position in the sequence. In particular, our formalism models coverage and calibration of ranked lists, two standard problems in the design of content recommendation systems. We have also shown that greedy ordered-submodular maximization gives a $2$-approximation and that this bound is tight on simple instances of the coverage problem. This quantitative result establishes our framework as qualitatively distinct from previous formalisms of set and sequence submodularity, and thus our work has provided the first performance guarantee for approximate optimization of this type.

It is interesting to consider the greedy algorithm in the calibration problem and ask whether the factor of 2 is tight here too, or if the greedy algorithm always performs better in this specific context. Another potential direction for further investigation is parametrizing worst-case instances of the calibration problem, since we found the greedy solution to be very close to optimal across many randomly generated instances. More generally, we pose the natural open question: Does there exist a polynomial time approximation algorithm for ordered submodular maximization achieving a constant factor better than $2$? Or does the analogy to set submodularity continue to hold, in that the greedy algorithm provides the best approximation guarantee possible? Further understanding the approximability of this class of problems is a key next step in the continued development and application of our framework. 


\bibliographystyle{alpha}
\bibliography{main.bib}

\newpage
\appendix

\section{Appendix}

\subsection{Greedy algorithm on variants of the KL divergence} \label{sec:bad_netflix}
A natural hope might be to use the KL divergence as a calibration heuristic, as it is perhaps the most commonly used statistical divergence. Unfortunately, the KL divergence cannot be used directly because it is unbounded; our translation to the distance-based overlap measure is also not well-defined on the KL divergence for the same reason. In~\cite{Steck18} an alternative transformation is proposed, yielding the following calibration heuristic: 
$$f(\mathcal{I}) = \sum_g p(g|u) \log \sum_{i \in \mathcal{I}} w_{r(i)} \tilde{q} (g|i).$$
However, this objective function has inconsistent sign, depending on how the recommendation weights are chosen (and we note that Steck does not set any constraints on the weights), and consequently the greedy choice can be far from optimal. In fact, we show that the greedy solution can be negative, while the optimum is positive. So the KL divergence (and variants of it) are not conducive to multiplicative approximation guarantees for the calibration problem.

Suppose there are $4$ genres ($g_k$ for $k=1,2,3,4$), $2$ movies ($i_\ell$ for $\ell=1,2$), and $1$ user ($u$), and that we seek a recommendation list of length $2$ with weights $w_1 > w_2 = 1$. For simplicity of notation, we denote $p(g_k|u)$ as $p_k$. Suppose further that the movies have the following distributions over genres for some $\varepsilon \in (0,\frac{1}{3})$:

\begin{center}
\begin{tabular}{ p{4cm} p{4cm} }
$\tilde{q}(g_1 | i_1) = \frac{1}{2}(1-\varepsilon)$ & $\tilde{q}(g_1 | i_2) = \frac{1}{2}(1-\varepsilon)$ \\[6pt]
$\tilde{q}(g_2 | i_1) = \frac{1}{4}(1-\varepsilon)$ & $\tilde{q}(g_2 | i_2) = \frac{1}{2}(1-\varepsilon)$ \\[6pt]
$\tilde{q}(g_3 | i_1) = \frac{1}{4}(1-\varepsilon)$ & $\tilde{q}(g_3 | i_2) = \frac{\varepsilon}{2}$ \\[6pt]
$\tilde{q}(g_4 | i_1) = \varepsilon$ & $\tilde{q}(g_4 | i_2) = \frac{\varepsilon}{2}$ 
\end{tabular}
\end{center}

Finally, suppose the parameters are such that $$p_3 \log\left(\frac{1-\varepsilon}{2\varepsilon} \right) = (p_2-p_4) \log \left(\frac{2w_1 + 1}{w_1+2} \right).$$

Then, observe that 
\begin{align*}
    f(i_1 i_2) - f(i_2 i_1) &= p_2 \log \left( \frac{\frac{w_1}{4}(1-\varepsilon) + \frac{1}{2}(1-\varepsilon)}{\frac{w_1}{2}(1-\varepsilon) + \frac{1}{4}(1-\varepsilon)} \right) + p_3 \log\left( \frac{\frac{w_1}{4}(1-\varepsilon) + \frac{\varepsilon}{2}}{\frac{w_1\varepsilon}{2} + \frac{1}{4}(1-\varepsilon)} \right) + p_4 \log \left(\frac{w_1\varepsilon + \frac{\varepsilon}{2}}{\frac{w_1\varepsilon}{2} + \varepsilon} \right) \\
    &= p_2 \log\left(\frac{w_1 + 2}{2w_1 + 1} \right) + p_3 \left( \frac{w_1(1-\varepsilon) + 2\varepsilon}{2w_1\varepsilon + 1-\varepsilon} \right) + p_4 \log \left( \frac{2w_1 + 1}{w_1 + 2} \right) \\
    &= p_3 \left( \frac{w_1(1-\varepsilon) + 2\varepsilon}{2w_1\varepsilon + 1-\varepsilon} \right) + (p_4-p_2) \log \left( \frac{2w_1 + 1}{w_1 + 2} \right).
\end{align*}
We can verify that for $\varepsilon < \frac{1}{3}$, we have $\frac{w_1(1-\varepsilon) + 2\varepsilon}{2w_1\varepsilon + 1-\varepsilon} < \frac{1-\varepsilon}{2\varepsilon}$, thus 
\begin{align*}
    f(i_1 i_2) - f(i_2 i_1) &< p_3 \left( \frac{1-\varepsilon}{2\varepsilon}\right) + (p_4-p_2) \log \left( \frac{2w_1 + 1}{w_1 + 2} \right) = 0\\
    \implies f(i_1 i_2) &< f(i_2 i_1).
\end{align*}
That is, the optimal recommendation list ranks $i_2$ first, then $i_1$ second.

However, we also have
\begin{align*}
    f(i_1) - f(i_2) &= p_2 \log \left( \frac{\frac{w_1}{4}(1-\varepsilon)}{\frac{w_1}{2}(1-\varepsilon) } \right) + p_3 \log\left( \frac{\frac{w_1}{4}(1-\varepsilon)}{\frac{w_1\varepsilon}{2}} \right) + p_4 \log \left(\frac{w_1\varepsilon}{\frac{w_1\varepsilon}{2}} \right) \\
    &= p_2 \log\left(\frac{1}{2} \right) + p_3 \left( \frac{1-\varepsilon}{2\varepsilon} \right) + p_4 \log \left(2\right) \\
    &= p_3 \left( \frac{1-\varepsilon}{2\varepsilon} \right) + (p_4-p_2) \log \left(2 \right).
\end{align*}
Since $w_1 > 1$, we have $\frac{2w_1 + 1}{w_1+2} < 2$, thus
\begin{align*}
    f(i_1) - f(i_2) &> p_3 \left( \frac{1-\varepsilon}{2\varepsilon}\right) + (p_4-p_2) \log \left( \frac{2w_1 + 1}{w_1 + 2} \right) = 0\\
    \implies f(i_1) &> f(i_2).
\end{align*}
That is, the greedy algorithm will first choose $i_1$ instead of $i_2$, thereby constructing a suboptimal list.

Now, we compute $ALG = f(i_1i_2)$ and $OPT = f(i_2i_1)$ for the following set of parameters: $p_1 = 0.05, p_2 = 0.9, p_3=p_4 = 0.025$, $\varepsilon = 10^{-10}$, varying $w_1 > 1$. 
\begin{center}
\begin{tabular}{ c r r }
$w_1$ & $ALG$ & $OPT$ \\[3pt] \hline\\[-6pt]
1.1 & -0.823134 & -0.797737 \\[3pt]
1.5 & -0.691859 & -0.585156 \\[3pt]
2 & -0.549794 & -0.371873 \\[3pt]
3.5 & -0.201250 & 0.114023 \\[3pt]
5 & 0.0311358 & 0.386387 \\[3pt]
10 & 0.580034 & 1.01213 \\[3pt]
100 & 2.73099 & 3.20940 
\end{tabular}
\end{center}

We now observe that the function does not have consistent sign; $ALG$ and $OPT$ are negative for lower values of $w_1$ and positive for higher values of $w_1$. This is because the $\tilde{q}(g|i)$'s represent a probability distribution and are thus less than $1$, so when the weights are small we take the logarithm of a number less than $1$, so the function is negative; when the weights are sufficiently large, then the inner summand exceeds $1$ and the function becomes positive.

It is unclear how we should think about approximation when the value of a function is not always positive or negative --- for instance, the approximation ratio $ALG/OPT$ is meaningless, especially considering that $ALG$ and $OPT$ may have opposite signs (such as when $w_1 = 3.5$). So if the simple greedy algorithm is not always optimal, but we have no consistent way of comparing its performance with the optimal solution, then it becomes very difficult to understand the maximization (or approximate maximization) of this specific form of the calibration heuristic.

\subsection{Varying sequential dependencies in calibration}\label{sec:calibration_seqdep}

In Section~\ref{sec:relatedwork}, we described earlier formalisms of sequential submodularity that rely on postfix monotonicity and argued that many natural ordering problems, including the coverage objective function, are not postfix monotone. A different line of papers encodes sequences using DAGs and hypergraphs. Now, we show that this formalism also does not capture the rank-based sequential dependencies that we desire.

We present a simple instance of the calibration problem which hints at the potential intricacies of sequential dependencies. Suppose there are just $2$ genres ($g_1$ and $g_2$), $4$ movies ($i_1$, $i_2$, $i_3$, $i_4$), and $1$ user ($u$). Say that the target distribution is $p(g_1 | u) = p(g_2 | u) = 0.5$, and the weights of the recommended items are $w_1 = 0.5, w_2 = 0.3, w_3 = 0.2$. Suppose further that the movies have genre distributions as follows:

\begin{center}
\begin{tabular}{ p{3cm} p{3cm} }
$p(g_1 | i_1) = 0.4$, & $p(g_2 | i_1) = 0.6$ \\[6pt]
$p(g_1 | i_2) = 0.8$, & $p(g_2 | i_2) = 0.2$ \\[6pt]
$p(g_1 | i_3) = 1$, & $p(g_2 | i_3) = 0$ \\[6pt]
$p(g_1 | i_4) = 0$, & $p(g_2 | i_4) = 1$.
\end{tabular}
\end{center}

Our heuristic for measuring calibration is the overlap measure $G(p,q) = \sum_g \sqrt{p(g|u) \cdot q(g|u)}$. We now consider a few different recommended lists as input to the overlap measure:
\begin{align*}
    f(i_3 i_1 i_2) = G(p, (0.78,0.22)) \approx 0.956 \\
    f(i_3 i_2 i_1) = G(p, (0.82,0.18)) \approx 0.940 \\
    f(i_4 i_1 i_2) = G(p, (0.28,0.72)) \approx 0.974 \\
    f(i_4 i_2 i_1) = G(p, (0.32,0.68)) \approx 0.983 
\end{align*}

Here, we see that $f(i_3 i_1 i_2) > f(i_3 i_2 i_1)$, but $f(i_4 i_1 i_2) < f(i_4 i_2 i_1)$. So it is not always inherently better to rank $i_1$ before $i_2$ or $i_2$ before $i_1$; the optimal ordering is dependent on the context of the rest of the recommended list. Thus this very natural problem setting cannot be satisfactorily encoded by the DAG or hypergraph models of~\cite{Tschiatschek2017} and~\cite{Mitrovic18}, providing further motivation for our framework of ordered submodularity. 

\end{document}